\DeclareMathOperator{\Tr}{Tr}
\pgfplotsset{width=6.5cm,compat=1.18}
\DeclareRobustCommand{\rchi}{{\mathpalette\irchi\relax}}
\newcommand{\irchi}[2]{\raisebox{\depth}{$#1\chi$}}
\newtheorem{theorem}{Theorem}
\newtheorem{lemma}{Lemma}
\newtheorem{corollary}{Corollary}
\newtheorem{definition}{Definition}
\newtheorem{example}{Example}
\theoremstyle{remark}
\newtheorem{remark}{Remark}
\newenvironment{psmallmatrix}
  {\left(\begin{smallmatrix}}
  {\end{smallmatrix}\right)}
\newcommand\numeq[1]%
  \newcommand\numl[1]%
\newcommand\numleq[1]%
\newcommand\numgeq[1]%
\DeclarePairedDelimiter\bra{\langle}{\rvert}
\DeclarePairedDelimiter\ket{\lvert}{\rangle}
\DeclarePairedDelimiterX\braket[2]{\langle}{\rangle}{#1 \delimsize\vert #2}
\def\BibTeX{{\rm B\kern-.05em{\sc i\kern-.025em b}\kern-.08em
    T\kern-.1667em\lower.7ex\hbox{E}\kern-.125emX}}
\begin{document}

\title{Polar Codes Achieve Classical Capacity for Erasure and Unital Markovian Quantum Channels}
\author{\IEEEauthorblockN{Jaswanthi Mandalapu\IEEEauthorrefmark{1}, Vikesh Siddhu \IEEEauthorrefmark{3}, Krishna Jagannathan\IEEEauthorrefmark{1}\IEEEauthorrefmark{2} \\ \IEEEauthorrefmark{1} ee19d700@smail.iitm.ac.in, \IEEEauthorrefmark{2} krishnaj@ee.iitm.ac.in \\}
\IEEEauthorblockA{\IEEEauthorrefmark{1} Department of Electrical Engineering, IIT Madras  \\
\IEEEauthorrefmark{2} Centre for Quantum Information, Communication and Computing (CQuICC),  IIT Madras} \\
\IEEEauthorrefmark{3} IBM Quantum, IBM Research India
}
\vspace{-4in}

\maketitle

\begin{abstract}

We consider classical-quantum (cq-)channels with  memory, and establish that Arıkan-constructed polar codes achieve the classical capacity for two key noise models, namely for (i) qubit erasures and (ii) unital qubit noise with channel state information at the receiver. The memory in the channel is assumed to be governed by a discrete-time, countable-state, aperiodic, irreducible, and positive recurrent Markov process. We establish this result by leveraging classical polar coding guarantees established for finite-state, aperiodic, and irreducible Markov processes [FAIM] in \cite{csacsouglu2019polar}, alongside the finding that \emph{no entanglement is required} to achieve the capacity of Markovian unital and erasure quantum channels when transmitting classical information \cite{siddhu2024unital, mandayam2020classical}. More broadly, our work illustrates that for cq-channels with memory, where an optimal coding strategy is \emph{essentially classical,} polar codes can be shown to approach the capacity.
\end{abstract}
\pagenumbering{gobble}

\begin{IEEEkeywords}
Markovian channels, polar codes, erasures, unital noise, classical-quantum channels.
\end{IEEEkeywords}
\IEEEpeerreviewmaketitle

\section{Introduction}

Quantum information processing provides a novel way to carry out computation, communication, and cryptography using the principles of quantum mechanics.
Quantum channels are central to quantum information processing, they transmit qubits -- the quantum counterparts of classical bits, often entangled -- enabling non-classical phenomena.
However, the practical implementation of quantum communication poses challenges, primarily due to quantum noise (arising from qubit decoherence and thermal effects), which can compromise the integrity of the transmitted information.

The simplest type of information that can be sent across a quantum channel is classical information. The Holevo-Schumacher-Westmoreland (HSW) theorem characterizes the classical capacity of a quantum channel---the maximum rate at which classical information can be reliably transmitted over it. The theorem also establishes the existence of codes that approach this classical capacity. Recent advances in coding theory have begun to focus on explicit coding schemes to achieve this capacity \cite{wilde2011classical,wilde2013polar,wilde2012quantum}.
However, many of these constructions rely on idealized assumptions, such as the independent behavior of the quantum channel across successive uses. In practice, quantum channels often exhibit memory effects, where noise correlations between successive uses complicate the development and analysis of effective coding strategies.

This paper examines quantum channels governed by a discrete-time, countable state, irreducible, aperiodic, and positive recurrent Markovian noise process. Specifically, focusing on erasure and unital noise models, our work aims to design capacity-approaching codes for \emph{classical} information transmission over Markovian quantum channels. In particular, by exploiting the single-letter capacity expressions derived in \cite{mandayam2020classical, siddhu2024unital}, we extend and study the classical capacity achievability using polar codes, originally introduced in \cite{arikan2009channel}, within the context of quantum Markovian channels.

\subsection{Related Work and Our Contributions}
Arıkan proposed low-complexity coding schemes known as polar codes in \cite{arikan2009channel}, which are capacity-approaching for any binary symmetric memoryless classical channel. Underpinning polar codes is the phenomenon of channel polarization, an effect wherein one can analyze a set
of \emph{synthesized channels}, by ``channel combining” and ``channel splitting.” In this process, a fraction of the synthesized channels become `good' for data transmission, while the other channels are `bad' and useless for data transmission. Further, the fraction of good channels converges to channel capacity for larger block lengths, enabling efficient data transmission. Encouraged by these codes in the classical regime, and the well-established HSW coding theorem for quantum memoryless channels, Wilde and Guha~\cite{wilde} extended Arıkan's ideas to the latter setting. Specifically, they proved an explicit polarization of quantum channels and proposed an encoder and quantum successive cancellation decoder (QSCD) for efficient classical information transmission. 

In parallel, recent papers \cite{sasoglu2011polar,csacsouglu2019polar,shuval2018fast,7360760} have also considered the polarization properties of classical channels with memory, and proved that polar codes can be used directly to approach the capacity for fast-mixing classical channels. Motivated by these developments, in this work, we focus on quantum memory channels modeled by Markov processes and propose classical capacity-achieving polar codes for erasure and unital noise models. In summary, our main contributions are as follows:
\begin{enumerate}
    \item \emph{Equivalence to Induced Classical Channel:} For simple noise models like erasure and unital channels (with known channel state information), prior works \cite{mandayam2020classical,siddhu2024unital} showed that encoding of classical bits into \emph{untangled, orthogonal
    } quantum states achieves the classical-capacity. Consequently, we demonstrate that the classical coding achieves the classical capacity for such channels.
    \item \emph{Polar Coding Guarantees for Markovian Channels:} Next, by approximating countable-state Markov processes with finite-state counterparts, and leveraging Portmanteau theorem \cite[Sec.~2.5]{van2000asymptotic}, we extend the results of \cite{csacsouglu2019polar} to countable state space channels. This step completes the achievability proof of classical capacity using Arıkan polar codes for erasure and unital Markovian quantum channels.
\end{enumerate}
The rest of the paper is organized as follows: Sec.~\ref{sec-2} introduces the erasure and unital Markovian channel models and summarizes existing results on their classical capacities. Sec.~\ref{sec:polar-FAIM} reviews classical polar coding schemes and known results for finite-state channels. Finally, building on these foundations, Sec.~\ref{sec-4} presents our main results and proofs.

\section{The Erasure and Unital Markovian CQ-Channels and their Capacities} \label{sec-2}
\subsection{The Markovian Classical-Quantum (CQ-)Channel}\label{faim-cq-channel}
In this section, we introduce the framework of a cq-channel governed by a discrete-time, countable-state, aperiodic, irreducible and positive recurrent Markovian noise process. 
Let $\mathbb{N} = \{1,2,\ldots\}$ be the set of natural numbers.
In a Markovian cq-channel, $n \in \mathbb{N}$ input bits $X_1^n \in \mathcal{X}^n$, $\mathcal{X} = \{0,1\}$, are encoded into a \emph{noiseless,} possibly entangled, quantum state $\sigma_X^{(n)} $ over an $n$-partite input Hilbert space $\mathcal{H}_A^{(n)}$.
Transmitting this state over an $n$-use quantum memory channel
$\mathcal{N}^{(n)}$ (defined below) results in an output state $\rho_X^{(n)}$ over the output quantum systems $B_1^n,$ with the Hilbert space $\mathcal{H}_B^{(n)}$, i.e.
\begin{equation}
    \rho_X^{(n)} =  \mathcal{N}^{(n)}(\sigma_X^{(n)}).
\end{equation}
From this received quantum state $\rho_X^{(n)},$ we obtain an estimate of
transmitted input $\hat{X}_1^n.$ Specifically, we define our classical-quantum (cq-)channel as 
\begin{align}\label{defn:cq-channel}
    W^{(n)}:X_1^n \to \rho_X^{(1,n)},
\end{align}
where $W^{(n)} = \mathcal{N}^{(n)} \circ \mathcal{E}^{(n)}$ such that
$\mathcal{E}^{(n)} : X_1^n \to \sigma_X^{(n)}$ is the noiseless encoding
operation defined above. Further, we use $\mathcal{B}(\mathcal{H}_A^{(n)}),$ and $
\mathcal{B}(\mathcal{H}_B^{(n)})$ to indicate the sets of all density operators on
the Hilbert spaces $\mathcal{H}_A^{(n)}$ and $\mathcal{H}_B^{(n)}$,
respectively.

We now provide the formal definition of a Markovian quantum channel
$\mathcal{N}^{(n)}$ and subsequently recall its classical capacity from \cite{datta2009classical,hayashi2003general}.
\begin{definition}
Let $\mathcal{K}$ denote the countable state space of a underlying Markov chain governing the noise process $\{K_i, i \in \{1,\ldots,n\}\}$. Further, let $\mathbf{K} = \{k_{ss'}\}_{s,s' \in \mathcal{K}}$ denote its transition probability matrix,
and let $\{\pi_{s}\}_{s \in \mathcal{K}}$ represent the stationary distribution, i.e., $\pi_{s'} = \sum_{s \in \mathcal{K}} \pi_s k_{ss'}, ~\forall s'
\in \mathcal{K}.$ Then, an $n-$use quantum Markovian channel
$\mathcal{N}^{(n)}$ can be defined as 
\begin{equation}\label{FAIM-Channel}
\begin{split}
&\mathcal{N}^{(n)}\left(\sigma^{(n)}\right) \\
    & = \sum_{s_1,\ldots,s_n \in \mathcal{K}} \pi_{s_1} k_{s_1 s_2} \ldots k_{s_{n-1} s_n} \left(\mathcal{N}_{s_1} \otimes \ldots \otimes \mathcal{N}_{s_n}\right) \left(\sigma^{(n)}\right),
\end{split}
\end{equation}
where $\mathcal{N}_s:\mathcal{B}(\mathcal{H}_A) \to \mathcal{B}(\mathcal{H}_A)$
is a completely positive trace-preserving (CPTP) map for each $s \in
\mathcal{K}.$
\end{definition}
With out loss of generality, we assume $\mathcal{K} = \mathbb{Z}^+$, and in \eqref{FAIM-Channel} that the Markov process is started (and
hence remains) in the stationary distribution $\{\pi_{i}\}_{i \in \mathcal{K}}$
throughout the transmission process.

\subsection{The Classical Capacity of a general CQ-Channel}
 An error correcting code for an arbitrary cq-channel can be defined as follows; see \cite{hayashi2003general} for more information.
\begin{definition}
A code $\mathcal{C}^{(n)}$ of size $N_n$ is given by a sequence $\{\sigma_i^{(n)}, M_i^{(n)}\}_{i=1}^{N_n}$, where $\sigma_i^{(n)}
    \in \mathcal{B}(\mathcal{H}_A^{(n)})$, and $M_i^{(n)}$ are positive semi-definite
    operators on $\mathcal{H}_B^{(n)}$ such that ${\sum_{i=1}^{N_n}
    M_i^{(n)} \leq \mathbf{1}^{(n)}}$; here, $\mathbf{1}^{(n)}$ denotes an
    identity operator in $\mathcal{B}(\mathcal{H}_B^{(n)})$. 
\end{definition}

Let $M_0^{(n)} = \mathbf{1}^{(n)} - \sum_{i=1}^{N_n} M_i^{(n)}$, performing 
    Positive Operator Valued Measurement (POVM) 
    $\{M_i^{(n)}\}$ yields an output $i$, where
    $i \neq 0$ corresponds to
    $\sigma_i^{(n)}$ being transmitted through $\mathcal{N}^{(n)}$, and $i = 0$ corresponds to failure to decode.
    
\begin{definition}
    The average probability of error for the code $\mathcal{C}^{(n)}$ used across $\mathcal{N}^{(n)}$ is given by
    \begin{align*}
        P_e(\mathcal{C}^{(n)}) \coloneqq \frac{1}{N_n} \sum_{i=1}^{N_n} \left( 1 - \Tr (\mathcal{N}^{(n)}(\sigma_i^{(n)}) M_i^{(n)}\right).
    \end{align*}
\end{definition}

\begin{definition}
    A rate $R$ is said to be achievable, if there exists an $n_0 \in \mathbf{N}$ such that for all $n \geq n_0,$ there exists a sequence of codes $\{\mathcal{C}^{(n)}\}_{n=1}^{\infty},$ of sizes $N_n \geq 2^{nR},$ for which $P_e(\mathcal{C}^{(n)}) \to 0$ as $n \to \infty.$ Further, the classical capacity of channel $\mathcal{N}$ is defined as 
    $$
        C(\mathcal{N}) \coloneqq \sup R,
    $$
    where $R$ is an achievable rate.
\end{definition}
 For any arbitrary cq-channel, the classical capacity is evaluated using the quantum analogue of classical information-spectrum analysis introduced in \cite{hayashi2003general}. Specifically, let $\vec{\mathcal{N}}$ denote the sequence of quantum channels $\{\mathcal{N}^{(n)}\}_{n=1}^{\infty}$; $\vec{P}$ denote the totality of sequences $\{P^n(X^n)\}_{n=1}^{\infty}$ of probability distributions (with finite support) over input sequences $X_1^n$; $\vec{\sigma}$ denote the sequences of states $\{\sigma_X^{(n)}\}$ corresponding to the encoding $\mathcal{E}':X_1^n \to \sigma_X^{(n)}.$ Then,

\begin{theorem}\cite{hayashi2003general}
    The classical capacity of any arbitrary quantum channel is given as 
    \begin{align}\label{eq:capacity-general}
        C(\mathcal{N}) = \sup_{\vec{P},\vec{\rho}} \underline{{I}}(\{\vec{P},\vec{\rho}\}, \vec{N}),
    \end{align}
    where $\underline{{I}}(\{\vec{P},\vec{\rho}\}, \vec{N})$ is the quantum analog (see \cite[eq.~6]{hayashi2003general}) of inf-information rate defined in \cite{verdu1994general}.
\end{theorem}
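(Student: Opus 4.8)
The plan is to establish the formula in two directions: an achievability (direct) part showing that every rate below $\sup_{\vec P,\vec\rho}\underline{I}$ is achievable, and a converse showing no achievable rate can exceed it. Both rest on the quantum information-spectrum method, the natural noncommutative generalization of the Verd\'u--Han approach \cite{verdu1994general}. The central object is the spectral inf-mutual information rate: for a fixed input sequence $\vec P$ with induced classical-quantum states $\rho_{XB}^{(n)} = \sum_{x} P^n(x)\,\ket{x}\!\bra{x}\otimes \rho_x^{(n)}$ and marginals $\rho_X^{(n)}$, $\rho_B^{(n)} = \sum_x P^n(x)\rho_x^{(n)}$, one sets $\underline{I}$ equal to the supremum of all $a$ for which the spectral projection onto the subspace where $\rho_{XB}^{(n)} - 2^{na}\,\rho_X^{(n)}\otimes\rho_B^{(n)} \le 0$ carries vanishing weight under $\rho_{XB}^{(n)}$. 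Unpacking this, $\underline{I}$ is the liminf-in-probability of the normalized information density $\tfrac{1}{n}\big(\log\rho_{XB}^{(n)} - \log(\rho_X^{(n)}\otimes\rho_B^{(n)})\big)$, exactly mirroring the classical definition.

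For achievability I would fix an input distribution $\vec P$ and a target rate $R < \underline{I} - 2\delta$, draw $N_n=\lceil 2^{nR}\rceil$ codewords i.i.d.\ from $P^n$, and build the decoder from the spectral projectors $\Pi_x^{(n)} = \{\rho_x^{(n)} - 2^{n(R+\delta)}\rho_B^{(n)} \ge 0\}$ via the square-root (pretty-good) measurement $M_i^{(n)} = \big(\sum_j \Pi_{j}^{(n)}\big)^{-1/2}\Pi_i^{(n)}\big(\sum_j \Pi_j^{(n)}\big)^{-1/2}$. The error analysis hinges on the Hayashi--Nagaoka operator inequality $\mathbf 1 - (S+T)^{-1/2}S(S+T)^{-1/2} \le 2(\mathbf 1 - S) + 4T$ for $0\le S\le \mathbf 1$ and $T\ge 0$, which decouples the correct-codeword term from the sum over competing codewords despite noncommutativity. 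Taking the code average, the first term vanishes as $n\to\infty$ directly from the defining property of $\underline{I}$, while the second is bounded by $2^{-n\delta}$ using $\Tr[\rho_B^{(n)}\Pi_x^{(n)}] \le 2^{-n(R+\delta)}\Tr[\rho_x^{(n)}\Pi_x^{(n)}]$ together with the union bound over the $N_n$ codewords. Hence the average error probability vanishes, so $R$ is achievable; taking the supremum over $\vec P$ and $R$ gives the direct part.

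For the converse I would show that any sequence of codes of rate $R$ induces a hypothesis test between $\rho_{XB}^{(n)}$ and $\rho_X^{(n)}\otimes\rho_B^{(n)}$ (through the empirical input distribution $P^n(x_i)=1/N_n$ and the decoding POVM) whose performance is constrained by the code's reliability. Invoking the monotonicity of the quantum spectral divergence under the measurement channel, together with a Verd\'u--Han-type one-shot converse inequality, then yields $\tfrac{1}{n}\log N_n \le \underline{I}(\{\vec P,\vec\rho\},\vec{\mathcal N}) + \delta$, and hence $R \le \sup_{\vec P,\vec\rho}\underline{I}$.

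The step I expect to be the main obstacle is the achievability error analysis: the Hayashi--Nagaoka inequality is precisely what tames the noncommutativity of the output states, and verifying that both resulting terms vanish requires carefully matching the spectral-projector thresholds to the rate $R$ and the gap $\delta$. The converse, by contrast, reduces fairly directly to the one-shot hypothesis-testing converse once the measurement-monotonicity of the spectral divergence is in hand.
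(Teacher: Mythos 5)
The paper does not prove this theorem; it is imported verbatim as a citation to \cite{hayashi2003general}, so there is no in-paper argument to compare against. Your sketch correctly reconstructs the standard information-spectrum proof from that reference --- the spectral inf-mutual information rate, random coding with a square-root measurement built from spectral projectors, the Hayashi--Nagaoka operator inequality for the achievability error analysis, and a Verd\'u--Han-type converse --- which is essentially the same approach as the cited source.
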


\subsection{Noise Model}\label{sec:noise}
Under the above framework, we consider two noise models: (i) Erasure noise and (ii) Unital noise. For erasure noise we let each $\mathcal{N}_{K}$ in \eqref{FAIM-Channel} be an erasure channel,
\begin{align*}
    \mathcal{N}_K(\sigma) = (1-p(K))\sigma + p(K) \ket{e}\bra{e},
\end{align*}
where $\ket{e}\bra{e}$ is an erasure operator orthogonal to the input Hilbert space $\mathcal{H}_A$, and erasure probability $p(K)$ is a function ${p: \mathcal{K} \to [0,1]}$ mapping states in $\mathcal{K}$ to values in $[0,1]$. Next, we define a general unital qubit noise model by letting each $\mathcal{N}_{K}$ in~\eqref{FAIM-Channel} be a unital qubit channel, i.e, each $\mathcal{N}_{K}$ satisfies
\begin{align*}
    \mathcal{N}_K(\mathbb{I}_A) = \mathbb{I}_B,
\end{align*}
where $\mathbb{I}_A$ and $\mathbb{I}_B$ are the Identity operators on the input and output Hilbert spaces, $\mathcal{H}_A$ and $\mathcal{H}_B$, respectively.

\begin{example}
    \normalfont As an example of Markovian erasure noise, we introduce the `Gilbert-Elliott quantum erasure channel'~(GE-QEC). Note that this is an analogous model of the classical Gilbert-Elliott channel model \cite{mushkin1989capacity} to the quantum domain, where the erasure noise depends on the state of an underlying Markov process. In GE-QEC, we assume that the probability of erasure of a particular qubit $i$ depends on its noise state $K_i,$ which is denoted by $p(K_i).$ We further assume that the noise state $\{K_i, i \in \mathbb{N}\}$ evolves according to a discrete-time, two-state, aperiodic, and irreducible Markov chain as depicted in Fig.~\ref{fig:GGE_QEC}.
    \begin{figure}[t!]
     \centering
     \begin{tikzpicture}[node distance = 2cm]
        \node [circle,draw] (zero) {0};
        \node [circle,draw] (one) [right of=zero] {1};
        \path [->] (zero) edge [bend left] node [above] {$k_{01}$} (one) ;
        \path [->] (one) edge [bend left] node [below] {$k_{10}$} (zero) ;
        \path [->] (zero) edge [loop left] node [below] {$k_{00}$} (zero) ;
        \path [->] (one) edge [loop right] node [below] {$k_{11}$} (one) ;        
     \end{tikzpicture}
     \caption{The state transition diagram of a GE-QEC, where $k_{ij}$ represents the transition probability from state $i$ to state $j$.}
     \label{fig:GGE_QEC}
 \end{figure}
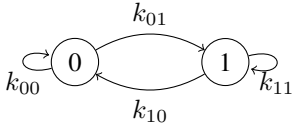
 \end{example}

 \begin{example}
\normalfont As an example of Markovian unital noise, we recall the qubit depolarizing quantum \emph{queue-channel} introduced in \cite{mandayam2020classical}, where the noise depends on the queue lengths seen at each arrival into the queue. In particular, we assume qubits are processed in a single-server queue according to First Come, First Serve (FCFS) discipline; the qubit arrival process is a Poisson process of fixed rate of $\lambda$ qubits per second, and the service times are independent and exponentially distributed with a mean of $\mu$. We assume the noise in the system arises as the qubits experience larger queuing delays in the queue. Specifically, if $Q_i$ indicates the queue-length seen at the $i^{th}$ arrival into the queue, then for the depolarizing noise, we assume each $\mathcal{N}_{Q_i}$ in~\eqref{FAIM-Channel} is defined as
\begin{align*}
    \mathcal{N}_{Q_i}(\sigma_i) = (1-p(Q_i))\sigma_i + p(Q_i) \frac{\mathbb{I}_{B}}{2},
\end{align*}
where $p(Q_i)$ is the noise probability that depends on the queue-length $Q_i.$ Furthermore, it is known that the queue-lengths seen at each arrival in FCFS service discipline are governed by Lindley's recursion, and hence form a discrete-time, countable-state, irreducible Markov process; see \cite{mandayam2020classical,chatterjee2017capacity} for more details.
 \end{example}

Note that both of these examples follow the Markovian cq-channel framework introduced earlier. We now proceed to recall existing results on the classical capacities of erasure and unital Markovian noise models derived recently in \cite{mandayam2020classical} and \cite{siddhu2024unital}. First, for the erasure noise, let ${\varepsilon_{\pi} =
\mathbb{E}_{\pi}[p_{K_i}]=\sum_{K_i\in\mathcal K}\pi_{K_i}p_{K_i}}$ denotes the stationary erasure probability. Then,
\begin{theorem}\label{thm-erasure}\cite{mandayam2020classical}
    The classical capacity of an `erasure' Markovian cq-channel is given by
    $
        {C(\mathcal{N}) = 1 - \varepsilon_{\pi} }
    $
     bits per channel use. Further, encoding of the classical bits into orthogonal and product quantum states
     achieve the capacity.
\end{theorem}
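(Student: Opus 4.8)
The plan is to reduce the quantum erasure channel to a classical Markovian erasure channel and then invoke ergodicity of the noise chain. The crucial structural fact is that the erasure flag $\ket{e}\bra{e}$ lives in a subspace orthogonal to the input space $\mathcal{H}_A$. Consequently the receiver can perform, on each of the $n$ output systems, the non-destructive projective measurement $\{\ket{e}\bra{e},\, \mathbb{I}_A\}$ and thereby learn the exact erasure pattern $S = (s_1,\dots,s_n) \in \{0,1\}^n$ (with $s_i = 1$ marking an erasure) without disturbing any of the surviving states. This measurement supplies the decoder with perfect side information about $S$ at no cost in distinguishability, and it is the linchpin that makes the problem essentially classical.

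For achievability I would encode each classical bit $x_i$ into one of the orthogonal product states $\ket{x_i} \in \{\ket{0}, \ket{1}\}$, so that no entanglement is used. After the erasure-detection step above, every non-erased output qubit is exactly $\ket{x_i}$ and can be read out perfectly by a computational-basis measurement, while erased positions return the flag $\ket{e}$. The end-to-end map is therefore literally a classical erasure channel in which position $i$ is erased with probability $p(K_i)$, the states $\{K_i\}$ evolving as the given irreducible, aperiodic, positive recurrent Markov chain started in $\pi$. By the ergodic theorem for such chains, the empirical erasure fraction $\frac{1}{n}\sum_{i=1}^{n} s_i$ converges almost surely and in $L^1$ to $\varepsilon_{\pi} = \mathbb{E}_\pi[p_{K_i}]$. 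Hence for every $\delta > 0$, with probability approaching one at least $(1 - \varepsilon_{\pi} - \delta)n$ coordinates survive; since the decoder knows which ones, a standard random-coding argument for erasure channels delivers reliable transmission at any rate below $1 - \varepsilon_{\pi}$.

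For the converse I would argue that handing the decoder the full erasure pattern $S$ can only enlarge the capacity, and is in any case free here by the orthogonality observation. Conditioned on $S$, the channel acts as the identity on the $n - \sum_{i} s_i$ surviving systems and discards the rest, so the accessible classical information about the message is at most $n - \sum_{i} s_i$ bits, uniformly over the (possibly entangled) input ensemble and over all collective measurements. Averaging this bound over the Markov trajectory and applying ergodicity once more gives $C(\mathcal{N}) \le \mathbb{E}\bigl[\,n - \sum_{i} s_i\,\bigr]/n \to 1 - \varepsilon_{\pi}$. Because the same bound holds for arbitrary entangled inputs, it simultaneously shows that product orthogonal encoding is optimal, matching the achievability construction.

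I expect the main obstacle to be the converse, specifically justifying rigorously that (i) revealing the erasure locations is lossless because the flag subspace is orthogonal and hence jointly measurable with, yet non-disturbing to, the information-carrying subspace, and (ii) the single-use bound of one classical bit per surviving qubit survives the passage to the inf-information-rate formula \eqref{eq:capacity-general} over a countable state space. The countability forces some care: one must confirm $\varepsilon_{\pi} < \infty$ and that positive recurrence licenses the ergodic averaging that converts the per-pattern bound into the stationary rate $1 - \varepsilon_{\pi}$.
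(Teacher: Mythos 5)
The paper offers no proof of this theorem---it is quoted directly from \cite{mandayam2020classical}---but your sketch follows essentially the same route as that reference: the orthogonality of the flag $\ket{e}\bra{e}$ lets the decoder extract the erasure pattern losslessly, reducing the problem to a classical Markov-modulated erasure channel with known erasure locations whose capacity $1-\varepsilon_{\pi}$ follows from ergodicity, while the converse is the Fano-plus-Holevo count of one classical bit per surviving qubit. The obstacle you correctly flag---carrying the converse through the inf-information-rate formula \eqref{eq:capacity-general} rather than a single-block mutual-information bound---is exactly where the cited work does its technical labor; note also that since the chain starts in $\pi$, stationarity alone gives $\tfrac{1}{n}\mathbb{E}\bigl[\sum_i s_i\bigr]=\varepsilon_{\pi}$ exactly, so ergodic averaging is really only needed on the achievability side, and $\varepsilon_{\pi}<\infty$ is automatic because $p(\cdot)\le 1$.
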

\noindent

Next, let $\{q_j\}$ be a probability distribution and $\{\rho_j\}$ be quantum states in $\mathcal{B}(\mathcal{H}_A)$. Then, Holevo information of the ensemble, $\{p_j, \rho_j\}$, $$\chi (\{p_j, \rho_j\}) = S\big( \mathbb{E}_{q}[\rho] \big) - \mathbb{E}_{q}[S(\rho_j)],$$ where $S(\rho) = - \Tr (\rho \log \rho)$ is the von-Neumann entropy. Further, the Holevo information 
$\chi(\mathcal{M})$ of any channel $\mathcal{M}:\mathcal{B}(\mathcal{H_A}) \mapsto \mathcal{B}(\mathcal{H}_B)$ is the maximum over ensembles $\{p_i, \mathcal{N}(\rho_i)\}$. Following this, for the unital noise, we have
\begin{theorem}\label{thm-unital} \cite{siddhu2024unital}
 For the `unital' Markovian cq-channel where the receiver knows the channel state information, the classical capacity is given by
    $
        {C({\mathcal{N}}) = \mathbb{E}_{\pi}\left[\rchi(\mathcal{N}_K)\right]}
    $
    bits per channel use, where $\rchi(.)$ is the Holevo quantity defined above. Furthermore, encoding of the classical bits into orthogonal and product quantum states
    achieve the capacity under two cases: (i) when the encoder knows the channel state information, or (ii) when the unital qubit Markovian channel is Pauli-ordered (see \cite[defn.~4]{siddhu2024unital}).
\end{theorem}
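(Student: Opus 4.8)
The plan is to extract the single-letter formula from the general information-spectrum capacity \eqref{eq:capacity-general}, splitting into a converse and an achievability half and exploiting two structural facts: the noise process $\{K_i\}$ is independent of the input, and since the receiver knows the channel state information (CSI), the state sequence may be appended to the output as an orthogonal classical register. \textbf{Converse.} Because this register is orthogonal it contributes the same term $H(P(k_1^n))$ to both entropies in the Holevo quantity, so the Holevo information of the augmented $n$-use channel decomposes as $\sum_{k_1^n} P(k_1^n)\,\chi\big(\{P(x_1^n),\,\mathcal{N}_{k_1^n}(\sigma_{x_1^n})\}\big)$. I would upper-bound each conditional term by the unconstrained Holevo capacity $\chi(\mathcal{N}_{k_1}\otimes\cdots\otimes\mathcal{N}_{k_n})$ and invoke additivity of the Holevo capacity for unital qubit channels (King), which collapses the product capacity to $\sum_i \chi(\mathcal{N}_{k_i})$. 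Averaging over the stationary marginals $K_i\sim\pi$ gives $\tfrac1n\chi^{(n)}\le \mathbb{E}_\pi[\chi(\mathcal{N}_K)]$, and the information-spectrum converse, which bounds $\underline{I}$ by $\liminf_n \tfrac1n\chi^{(n)}$, then yields $C(\mathcal{N})\le \mathbb{E}_\pi[\chi(\mathcal{N}_K)]$.

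\textbf{Achievability.} For each state $s$ let $\{p_x^{(s)},\sigma_x^{(s)}\}$ be an ensemble attaining $\chi(\mathcal{N}_s)$; with CSI at the encoder, transmit the matched product input $\bigotimes_i \sigma_{x_i}^{(k_i)}$. Conditioned on the realized sequence $k_1^n$ the channel is a product of known channels, so the per-letter conditional information density has mean $\chi(\mathcal{N}_{k_i})$. The key analytic step is then a quantum asymptotic-equipartition argument: by the strong law of large numbers for the aperiodic, irreducible, positive-recurrent chain, $\tfrac1n\sum_i \chi(\mathcal{N}_{K_i})\to\mathbb{E}_\pi[\chi(\mathcal{N}_K)]$ almost surely, so the normalized information density concentrates and the liminf-in-probability defining $\underline{I}$ equals $\mathbb{E}_\pi[\chi(\mathcal{N}_K)]$, matching the converse and pinning down the capacity.

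\textbf{Orthogonal product inputs and the main obstacle.} For the last claim I would use that every qubit unital channel is, up to input/output unitaries, a Pauli channel $\mathcal{N}_s(\rho)=\sum_k \lambda_k^{(s)}\sigma_k\rho\sigma_k$, whose Holevo-maximizing ensemble is a pair of \emph{orthogonal} pure states aligned with the least-contracted Bloch axis. In case (i) the encoder uses CSI to select, at each use, the orthogonal basis optimal for $\mathcal{N}_{k_i}$; in case (ii) the Pauli-ordering guarantees a single common basis is simultaneously optimal across all states, so a fixed non-adaptive orthogonal product encoding already attains $\sum_i\chi(\mathcal{N}_{k_i})$ and hence the capacity. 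I expect the single-letterization to be the main obstacle: without additivity one obtains only the regularized rate $\lim_m \tfrac1m\chi(\mathcal{N}_s^{\otimes m})$, and it is precisely King's additivity theorem for unital qubit channels that makes the single-letter $\chi(\mathcal{N}_K)$ exact. A secondary delicate point is upgrading the i.i.d.\ AEP to a Markov (Shannon--McMillan--Breiman) convergence for the state-modulated information density, which is where positive recurrence and aperiodicity of the chain are essential.
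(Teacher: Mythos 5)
The paper does not actually prove this statement: Theorem~\ref{thm-unital} is imported verbatim from \cite{siddhu2024unital} (as Theorem~\ref{thm-erasure} is from \cite{mandayam2020classical}), so there is no in-paper proof to compare against. Judged on its own, your sketch is a faithful reconstruction of the argument in the cited source, and the key ingredients are correctly identified: appending the receiver's CSI as an orthogonal classical register so that the $n$-use Holevo quantity decomposes as $\sum_{k_1^n}P(k_1^n)\,\chi\big(\{p_x,\mathcal{N}_{k_1^n}(\sigma_x)\}\big)$; invoking King's additivity for unital qubit channels to single-letterize the conditional product-channel bound to $\sum_i\chi(\mathcal{N}_{k_i})$; averaging over the stationary marginals for the converse; and combining the ergodic theorem for the positive-recurrent chain with a conditional law of large numbers for achievability. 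You also correctly explain why orthogonal product encodings suffice (for a unital qubit channel $\chi(\mathcal{N})=1-S_{\min}(\mathcal{N})$, attained by an antipodal orthogonal pair on the least-contracted Bloch axis) and why encoder CSI or the Pauli-ordering condition is needed to pin down a usable basis. Two points worth tightening if you were to write this out: (i) concentration of the information density requires both the almost-sure convergence of $\frac1n\sum_i\chi(\mathcal{N}_{K_i})$ and a conditional (given $K_1^n$) weak law of large numbers for the resulting product channel, since the state-sequence randomness and the channel randomness enter at different levels; and (ii) the achievability half should be routed through the induced classical channel (orthogonal inputs followed by a product measurement), so that the classical information-spectrum direct theorem applies --- this reduction is exactly what the present paper then exploits in Theorem~\ref{thm-erasure-capacity} to bring polar codes to bear.
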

\begin{corollary}
     For the `depolarizing' Markovian cq-channel with known channel state information at the receiver, the classical capacity is given by $
        C({\mathcal{N}}) = \mathbb{E}_{\pi}\left[1 - h\left( \frac{p(K)}{2}\right)\right] ~~~\text{bits per channel use},
    $   
    where $h(.)$ denotes the binary entropy function. Further, encoding of the classical bits into orthogonal and product quantum states achieve the capacity.\end{corollary}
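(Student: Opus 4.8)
The plan is to obtain the corollary directly from Theorem~\ref{thm-unital} by recognizing the depolarizing channel as a unital qubit channel and evaluating its Holevo quantity in closed form. First I would verify unitality: writing the depolarizing map in its trace-preserving form $\mathcal{N}_K(\rho) = (1-p(K))\rho + p(K)\tfrac{\Tr(\rho)}{2}\mathbb{I}$, linearity gives $\mathcal{N}_K(\mathbb{I}_A) = (1-p(K))\mathbb{I} + p(K)\mathbb{I} = \mathbb{I}_B$ (since $\Tr(\mathbb{I}_A)=2$), so each $\mathcal{N}_K$ is unital and Theorem~\ref{thm-unital} applies, yielding $C(\mathcal{N}) = \mathbb{E}_\pi[\rchi(\mathcal{N}_K)]$.

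The heart of the argument is to show $\rchi(\mathcal{N}_K) = 1 - h\!\left(p(K)/2\right)$, which I would establish in three steps. (Step~1) Restrict to pure-state ensembles: given any ensemble $\{q_j,\rho_j\}$, decomposing each $\rho_j$ into pure states leaves the average output $\bar\rho = \sum_j q_j \mathcal{N}_K(\rho_j)$ unchanged while, by concavity of the von~Neumann entropy, not decreasing the Holevo quantity, so the supremum is attained on pure inputs. (Step~2) Observe that the output entropy is input-independent on pure states: for any $\ket{\psi}$, the state $\mathcal{N}_K(\ket{\psi}\bra{\psi}) = (1-p(K))\ket{\psi}\bra{\psi} + \tfrac{p(K)}{2}\mathbb{I}$ has eigenvalues $1-p(K)/2$ and $p(K)/2$ irrespective of $\ket{\psi}$, hence $S(\mathcal{N}_K(\ket{\psi}\bra{\psi})) = h(p(K)/2)$ and the averaged term $\sum_j q_j S(\mathcal{N}_K(\rho_j))$ equals $h(p(K)/2)$ for every pure ensemble. (Step~3) Maximize the remaining term $S(\bar\rho)\le 1$, with equality when $\bar\rho = \mathbb{I}/2$; the equiprobable ensemble $\{(\tfrac12,\ket{0}),(\tfrac12,\ket{1})\}$ achieves this, giving $\rchi(\mathcal{N}_K) = 1 - h(p(K)/2)$. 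Substituting into Theorem~\ref{thm-unital} yields the claimed capacity.

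For the product-state achievability claim, I would note that the depolarizing map is a Pauli channel, $\mathcal{N}_K(\rho) = \left(1-\tfrac{3p(K)}{4}\right)\rho + \tfrac{p(K)}{4}\left(X\rho X + Y\rho Y + Z\rho Z\right)$, whose Pauli weights satisfy the same ordering $q_I(K)\ge q_X(K)=q_Y(K)=q_Z(K)$ for every Markov state $K$. Thus the channel is Pauli-ordered in the sense of \cite[defn.~4]{siddhu2024unital}, so case~(ii) of Theorem~\ref{thm-unital} applies and orthogonal product states achieve capacity. The optimal such ensemble is precisely the $\{\ket{0},\ket{1}\}$ encoding identified in Step~3, which is orthogonal and product across channel uses.

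I expect the main obstacle to be not the capacity formula itself---which follows cleanly once one exploits the input-independence of the pure-state output entropy in Step~2---but the verification of the Pauli-ordered condition, since it depends on the precise definition imported from \cite{siddhu2024unital}. The strong symmetry of the depolarizing channel (its unitary covariance and the equality of the three non-identity Pauli weights) should make this condition essentially automatic.
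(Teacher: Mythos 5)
Your proposal is correct and follows the same route the paper intends: the corollary is stated as an immediate specialization of Theorem~\ref{thm-unital} to the depolarizing channel, and your computation of $\rchi(\mathcal{N}_K) = 1 - h\left(p(K)/2\right)$ (via reduction to pure-state ensembles, the input-independent output entropy, and the maximally mixed average output) together with the Pauli-ordered observation is exactly the derivation the paper leaves implicit. No gaps.
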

\begin{remark}
    When the receiver does not have knowledge about the underlying states of a depolarizing channel the classical capacity is not known precisely. Specifically, the capacity is known to be between $1 - h\left( \frac{\mathbb{E}_{\pi}[p(K)]}{2}\right)$ and $\mathbb{E}_{\pi}\left[1 - h\left( \frac{p(K)}{2}\right)\right]$ bits per channel use. These upper and lower bounds are separated by a \emph{Jensen gap}.
    \end{remark}
Next, prior to fulfilling our aim of presenting explicit capacity-approaching polar codes for erasure and unital qubit Markovian cq-channels, we review classical polar coding and its achievable guarantees for classical Markovian channels in the next section.

\section{Classical Polar Coding and its \\ Achievable Guarantees}\label{sec:polar-FAIM}
%
Let $(X_i,Y_i)$, $i \in \mathbb{N}$ be a classical stationary and ergodic process, with $X_i$'s, and $Y_i$'s indicating the inputs and outputs of a traditional classical noisy channel denoted by $(X^{(n)},Y^{(n)},P_{Y^{(n)}|X^{(n)}})$, respectively.
We assume that $X_i$'s are binary and $Y_i \in \mathcal{Y}$, where $\mathcal{Y}$ is a finite alphabet; further, the outputs $Y^{(n)}$ are governed by a Finite-state, Aperiodic, and Irreducible Markov (FAIM) process. 
We refer to $(X^{(n)},Y^{(n)},P_{Y^{(n)}|X^{(n)}})$ as a `FAIM classical channel' --- an example is the Gilbert-Elliot channel introduced in \cite{mushkin1989capacity}. 

 Let the polar transformed channel input be given as\cite{arikan2009channel}: 
 \begin{align}\label{eq:polartransform}
     U_1^N = X_1^N F_N G_N,
 \end{align}
where $N=2^n$ is the block length for some $n \in \mathbb{N}$, $F_N$ is the $N \times N$ bit reversal matrix, and $G_N$ is the $n^{th}$ Kronecker power of the matrix $\begin{psmallmatrix} 1 & 0\\1 & 1\end{psmallmatrix}$. 
Define the conditional entropy rate as
\begin{equation}\label{entropy-rate}
\begin{split}
    \mathcal{H}_{X|Y} &= \lim_{N \to \infty} \frac{1}{N} H(X_1^N|Y_1^N). \\
\end{split}
\end{equation}
\begin{lemma}\label{thm:slow-po-sasoglu}\cite{sasoglu}
    The above defined FAIM channel satisfies
    \begin{align*}
         \lim_{N \to \infty} \frac{1}{N} |\{i : \mathit{I}(U_i; U_1^{i-1},Y_1^N) < \gamma\}| &= 1 - \mathcal{H}_{X|Y},\\
     \lim_{N \to \infty} \frac{1}{N} |\{i : \mathit{I}(U_i; U_1^{i-1},Y_1^N) > 1 - \gamma\}| &=  \mathcal{H}_{X|Y},
    \end{align*}
    for all $\gamma > 0.$
Additionally, we also observe
     \begin{align*}
         \lim_{N \to \infty} \frac{1}{N} |\{i : \mathit{Z}(U_i| U_1^{i-1},Y_1^N) < 2^{-N^{\beta}}\}| = 1 - \mathcal{H}_{X|Y},
     \end{align*}
     for $\beta < 1/2.$
\end{lemma}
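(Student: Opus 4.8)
The plan is to prove all three limits inside Ar\i kan's martingale framework for polarization, adapted to the FAIM (memory) setting as in \cite{sasoglu,csacsouglu2019polar}. Since $F_N G_N$ in \eqref{eq:polartransform} is invertible over $\mathrm{GF}(2)$ and the symmetric input $X_1^N$ is uniform, $U_1^N$ is uniform as well, so each $U_i$ is marginally uniform and $I(U_i;U_1^{i-1},Y_1^N) = 1 - H(U_i\mid U_1^{i-1},Y_1^N)$. It therefore suffices to show that the conditional entropies $\mathsf H^{(i)}_N := H(U_i\mid U_1^{i-1},Y_1^N)$ polarize to $\{0,1\}$ and to identify the limiting proportions; the statements for $I$ follow through $I=1-H$, and the Bhattacharyya statement follows from the standard equivalence of $Z$ and $H$ near the endpoints. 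The first ingredient is a conservation law: because the transform is a bijection, the chain rule gives $\sum_{i=1}^N \mathsf H^{(i)}_N = H(U_1^N\mid Y_1^N) = H(X_1^N\mid Y_1^N)$, so the empirical mean of $\mathsf H^{(i)}_N$ equals $\tfrac1N H(X_1^N\mid Y_1^N)\to \mathcal H_{X|Y}$ by \eqref{entropy-rate}.

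Next I would build the Ar\i kan martingale. Indexing the $N=2^n$ synthesized channels by a uniformly random string $b_1\cdots b_n\in\{0,1\}^n$, let $\mathsf H_n$ be the conditional entropy of the channel reached after the corresponding sequence of single-step ``$-$''/``$+$'' transforms. Each single step preserves total entropy, $H(W^-)+H(W^+)=2H(W)$, so $\{\mathsf H_n\}$ is a bounded $[0,1]$-valued martingale and, by the martingale convergence theorem, converges almost surely and in $L^1$ to a limit $\mathsf H_\infty$. The core of the argument is to show $\mathsf H_\infty\in\{0,1\}$ almost surely. Convergence forces the one-step increment to vanish, and since the two children are symmetric about the parent one gets $H(W^-)-H(W^+)\to 0$ along almost every path; I would then invoke a single-step polarization-gap estimate $H(W^-)-H(W^+)\ge \delta\big(H(W)\big)$ for some $\delta$ with $\delta(t)>0$ whenever $t\in(0,1)$, which forces $\mathsf H_\infty(1-\mathsf H_\infty)=0$.

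With polarization in hand, conservation closes the count: $\mathbb E[\mathsf H_\infty]=\mathbb E[\mathsf H_0]=\mathcal H_{X|Y}$ together with $\mathsf H_\infty\in\{0,1\}$ give $\Pr(\mathsf H_\infty=1)=\mathcal H_{X|Y}$ and $\Pr(\mathsf H_\infty=0)=1-\mathcal H_{X|Y}$. Translating through $I=1-H$, the reliable indices ($H\to 0$, i.e.\ $I\to 1$) have limiting fraction $1-\mathcal H_{X|Y}$ and the unreliable ones ($H\to 1$, $I\to 0$) have fraction $\mathcal H_{X|Y}$, yielding the two mutual-information limits. For the refined $2^{-N^\beta}$ rate I would track the Bhattacharyya parameter $\mathsf Z_n:=Z(U_i\mid U_1^{i-1},Y_1^N)$ along the same random path, using $Z(W^+)=Z(W)^2$ and $Z(W^-)\le 2Z(W)$. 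On the polarized event $\mathsf H_\infty=0$ (of measure $1-\mathcal H_{X|Y}$), $\mathsf Z_n$ is eventually small and squares at every ``$+$'' step; a Chernoff bound on the fraction of ``$+$'' steps among the first $n$ then gives $\mathsf Z_n\le 2^{-N^\beta}$ for every $\beta<1/2$, following Ar\i kan--Telatar, which is the third limit.

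The main obstacle is the single-step polarization-gap estimate in the presence of memory: unlike the memoryless case of \cite{arikan2009channel}, the synthesized channels here are themselves non-memoryless, so Ar\i kan's one-step lemma cannot be invoked directly. The resolution, which I would import from \cite{sasoglu,csacsouglu2019polar}, uses the aperiodicity and irreducibility of the underlying chain---i.e.\ its mixing---to show that the relevant conditional entropies behave asymptotically like those of a stationary memoryless proxy, so that the entropy gap (and the Bhattacharyya recursion) survive uniformly along the polarization tree. Establishing that these FAIM mixing hypotheses, rather than an i.i.d.\ structure, furnish the uniform gap is the crux of the proof.
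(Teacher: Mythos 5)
The paper offers no proof of this lemma at all: it is imported verbatim as a citation to \cite{sasoglu} (and relies on \cite{csacsouglu2019polar} for the fast-polarization part), so there is no in-paper argument to compare against. Your sketch is a faithful outline of how the cited proof actually runs --- uniformity of $U_1^N$ under the invertible transform, the chain-rule conservation $\sum_i H(U_i\mid U_1^{i-1},Y_1^N)=H(X_1^N\mid Y_1^N)$, martingale convergence, extremality of the limit, and the Bhattacharyya squaring for the $2^{-N^{\beta}}$ rate --- and you correctly identify the crux that the paper silently outsources: with memory the synthesized channels do not decompose into two independent copies of the parent, so neither the one-step gap estimate nor the exact relations $H(W^-)+H(W^+)=2H(W)$ and $Z(W^+)=Z(W)^2$ can be invoked as identities of a single-letter channel; they must be recovered asymptotically from the mixing (aperiodicity and irreducibility) of the FAIM process. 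Since you explicitly defer that repair to \cite{sasoglu,csacsouglu2019polar} rather than proving it, your proposal ends up in the same place as the paper --- a correct architecture resting on the same external results --- just with more of the scaffolding made visible. The only point I would flag as a genuine imprecision is stating $Z(W^+)=Z(W)^2$ as an equality: in the memory setting one only gets an inequality of the form $Z^+\le \psi\, Z^2$ with a mixing-dependent constant, which is exactly what \cite{csacsouglu2019polar} establishes and what makes the $\beta<1/2$ exponent survive.
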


Note that we have used standard notation $\mathit{I}(\cdot ; \cdot)$ and $\mathit{Z}(\cdot|\cdot)$ to denote the classical mutual information ~(with uniform channel input) and Bhattacharyya parameter, respectively~(as in \cite{sasoglu}).
Following these polarization results, it is also straightforward to show the following theorem for FAIM channels.
\begin{theorem}
    \label{FAIM-channels}
   For a FAIM classical channel, standard Arıkan's polar transform from \cite{arikan2009channel} can achieve rates $ {\mathrm{R} < 1 - \mathcal{H}_{X|Y}}$ bits per channel use, for sufficiently large enough block length $N$. Further, the block error probability decays at a rate of $O(2^{-N^{\beta}}), \beta < 1/2$ under successive cancellation trellis decoding. 
\end{theorem}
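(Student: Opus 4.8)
The plan is to assemble the theorem directly from the polarization guarantees of Lemma~\ref{thm:slow-po-sasoglu}, which does all the heavy lifting: the only genuinely new ingredients are the choice of the information set, the union bound controlling the successive-cancellation error, and the bookkeeping that turns the per-index Bhattacharyya bound into a block-error bound.

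First I would fix a target rate $R < 1 - \mathcal{H}_{X|Y}$ and an exponent $\beta < 1/2$, and define the information set
$$\mathcal{A}_N = \{i : Z(U_i \mid U_1^{i-1}, Y_1^N) < 2^{-N^{\beta}}\}.$$
The third display of Lemma~\ref{thm:slow-po-sasoglu} gives $|\mathcal{A}_N|/N \to 1 - \mathcal{H}_{X|Y}$, so for all $N$ large enough $|\mathcal{A}_N| \ge NR$, and we may place $\lfloor NR \rfloor$ information bits on the coordinates in $\mathcal{A}_N$ while freezing the coordinates in $\mathcal{A}_N^c$ to values known at the decoder. The encoder then transmits $X_1^N = U_1^N (F_N G_N)^{-1}$, which is well defined since $F_N G_N$ is invertible over $\mathbb{F}_2$. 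This already shows every $R < 1 - \mathcal{H}_{X|Y}$ is an admissible rate, giving the first claim.

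Next I would analyze the successive-cancellation trellis decoder, which estimates the $U_i$ in index order, outputting the known value on frozen coordinates and the likelihood-based estimate from $(\hat U_1^{i-1}, Y_1^N)$ on coordinates in $\mathcal{A}_N$. The role of the \emph{trellis} (forward-recursion / BCJR-type) implementation is that, because $Y_1^N$ is produced by a FAIM process, the required conditional likelihoods can be computed by summing the Markov state out through a forward recursion whose per-step cost is governed by the finite state space; this is precisely why memory does not obstruct the construction. I would then invoke the standard genie-aided union bound: with $\mathcal{E}$ the block-error event,
$$P_e \le \sum_{i \in \mathcal{A}_N} Z(U_i \mid U_1^{i-1}, Y_1^N).$$
Combining this with the defining property $Z < 2^{-N^{\beta}}$ of $\mathcal{A}_N$ and the crude bound $|\mathcal{A}_N| \le N$ yields
$$P_e \le N\, 2^{-N^{\beta}} = 2^{\log_2 N - N^{\beta}},$$
which is $O(2^{-N^{\beta'}})$ for any $\beta' < \beta < 1/2$ since $N^{\beta}$ dominates $\log_2 N$; absorbing the polynomial prefactor gives the claimed decay.

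The step I expect to need the most care is justifying that the Bhattacharyya union bound and the SC-decoder analysis survive intact under channel memory. The polarization rates are handed to us by Lemma~\ref{thm:slow-po-sasoglu}, so the work is not in proving polarization but in two consistency checks: (i) that the genie-aided SC error decomposition uses only the stationary Markov structure already assumed, not independence across channel uses, so the bound $P_e \le \sum_{i \in \mathcal{A}_N} Z(U_i \mid U_1^{i-1}, Y_1^N)$ holds verbatim; and (ii) that the trellis recursion computes exactly the conditional laws on which the parameters $Z(U_i \mid U_1^{i-1}, Y_1^N)$ are defined, so the per-index bound applies to the \emph{implemented} decoder rather than an idealized one. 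Once these are in place, the remaining rate and error-exponent arithmetic is routine.
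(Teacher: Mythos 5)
Your proposal is correct and matches the paper's (implicit) argument: the paper simply asserts that Theorem~\ref{FAIM-channels} follows from the polarization guarantees of Lemma~\ref{thm:slow-po-sasoglu}, and your write-up supplies exactly the standard details — information-set selection from the Bhattacharyya polarization, the genie-aided union bound $P_e \le \sum_{i\in\mathcal{A}_N} Z(U_i\mid U_1^{i-1},Y_1^N)$, and absorption of the polynomial prefactor into the exponent. Your flagged consistency checks (that the union bound and the trellis-based likelihood computation survive under Markov memory) are precisely the points handled in the cited works of \c{S}a\c{s}o\u{g}lu et al., so nothing is missing.
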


Using the results from the preceding sections as our building blocks, we now proceed to prove our main results. Specifically, we prove that classical polar encoding and decoding are sufficient to achieve the capacity for erasure and unital Markovian classical-quantum channels with a countable state space.

\section{Main Results and Proofs}\label{sec-4}
\begin{theorem}\label{thm-erasure-capacity}
    The capacity for an erasure Markovian cq-channel, stated in
    Theorem~\ref{thm-erasure}, can be achieved using standard Arıkan polar transformation \eqref{eq:polartransform} applied to classical bits before encoding them into quantum states along with product decoding of quantum states into classical bits at the receiver. Successive cancellation trellis decoding of these classical bits achieves block error probability of $O(2^{-N^{\beta}}), \beta < 1/2$ for sufficiently large enough block length $N.$
\end{theorem}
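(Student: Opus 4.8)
The plan is to collapse the quantum coding problem onto a classical one and then invoke the FAIM polarization guarantee of Theorem~\ref{FAIM-channels}. By Theorem~\ref{thm-erasure}, the capacity $1-\varepsilon_\pi$ is attained by \emph{orthogonal, product} states, so I would fix an orthonormal pair $\ket{0},\ket{1}\in\mathcal{H}_A$, transmit each channel-input bit $X_i$ as $\ket{X_i}$, and decode each output system by the product (per-use) POVM $\{\ket{0}\bra{0},\ket{1}\bra{1},\ket{e}\bra{e}\}$. Since $\mathcal{N}_K(\sigma)=(1-p(K))\sigma+p(K)\ket{e}\bra{e}$ maps $\ket{x}$ to $\ket{x}$ with probability $1-p(K)$ and to the orthogonal flag $\ket{e}$ otherwise, this measurement returns $X_i$ exactly on every unerased use and an erasure symbol otherwise. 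Hence the composite map ``encode $\to \mathcal{N}^{(n)}$ of \eqref{FAIM-Channel} $\to$ product-measure'' is \emph{exactly} a classical erasure channel whose $i$-th erasure probability is $p(K_i)$, with $\{K_i\}$ driven by the underlying chain; crucially, no entanglement at the encoder and no joint measurement at the decoder are used.

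For a \emph{finite}-state, aperiodic, irreducible chain this induced channel is precisely a FAIM classical channel, so Lemma~\ref{thm:slow-po-sasoglu} and Theorem~\ref{FAIM-channels} apply verbatim: the transform \eqref{eq:polartransform} achieves every rate $R<1-\mathcal{H}_{X|Y}$ with block error $O(2^{-N^\beta})$, $\beta<1/2$, under successive-cancellation trellis decoding. It then suffices to identify $1-\mathcal{H}_{X|Y}$ with the capacity. Taking uniform i.i.d. inputs (optimal for erasure), the erasure pattern is a function of $\{K_i\}$ and independent of the input values, so given $Y_1^N$ each unerased coordinate reveals $X_i$ while each erased coordinate leaves $X_i$ uniform; thus $H(X_1^N\mid Y_1^N)=\sum_{i=1}^N\Pr(\text{erasure at }i)=N\varepsilon_\pi$ by stationarity, giving $\mathcal{H}_{X|Y}=\varepsilon_\pi$ and $1-\mathcal{H}_{X|Y}=1-\varepsilon_\pi=C(\mathcal{N})$.

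The substantive step is to lift this from finite to the countable space $\mathcal{K}=\mathbb{Z}^+$ permitted by \eqref{FAIM-Channel}. Here I would introduce a sequence of finite truncations $\mathbf{K}^{(M)}$ of the chain on $\{1,\dots,M\}$, renormalised so as to remain aperiodic, irreducible, and positive recurrent, apply the finite-state conclusion to each, and let $M\to\infty$. The key quantitative facts are that the truncated stationary erasure probabilities satisfy $\varepsilon_\pi^{(M)}\to\varepsilon_\pi$ and, more delicately, that the polarization statistics of Lemma~\ref{thm:slow-po-sasoglu}---the fraction of ``good'' synthesized indices and their Bhattacharyya bounds---converge; this is exactly where the Portmanteau theorem \cite[Sec.~2.5]{van2000asymptotic} enters, turning weak convergence of the truncated stationary laws into convergence of the relevant expectations and set-size fractions. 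Concretely, for any target $R<1-\varepsilon_\pi$ one chooses $M$ with $R<1-\varepsilon_\pi^{(M)}$ and transfers the $M$-truncated polar code back to the true channel.

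The main obstacle is controlling this countable-state limit so that \emph{both} the rate $1-\mathcal{H}_{X|Y}$ and the $O(2^{-N^\beta})$ error exponent survive: I must justify interchanging the $N\to\infty$ polarization limit with the $M\to\infty$ truncation limit, ensure the mass on discarded states $\{>M\}$ contributes negligibly to the decoding error under the true (untruncated) law, and verify that positive recurrence of the limit is inherited by the truncations so that the stationary distributions exist and converge. Once this two-parameter limit is made rigorous via the Portmanteau argument, combining it with the classical reduction of the first paragraph completes the proof.
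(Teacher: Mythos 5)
Your proposal follows essentially the same route as the paper: reduce to the induced classical erasure channel via orthogonal product-state encoding and per-use measurement (the paper's Step-1), invoke the FAIM polarization guarantees of Lemma~\ref{thm:slow-po-sasoglu} and Theorem~\ref{FAIM-channels} for a finite-state truncation (Steps 2--3), and pass to the countable-state limit via convergence of the truncated stationary distributions and the Portmanteau theorem (Step-4). The only difference is that the paper pins down the truncation step with explicit Harris-recurrence drift conditions (Lemma~\ref{lemma:finite-approximation}) where you leave the inheritance of positive recurrence and the interchange of the $N\to\infty$ and truncation limits as flagged obstacles, but the overall argument is the same.
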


\noindent
\textit{Proof Outline:} We prove the above theorem using the following four-step approach.
\begin{itemize}
    \item \textit{Step-1:} First, we note that, as a consequence of Theorem~\ref{thm-erasure}, encoding of classical bits into orthogonal quantum states and product decoding suffices to achieve the classical capacity. Thus, we focus on approaching the capacity of the classical channel induced by this encoding-decoding\footnote{The induced classical channel of a cq-channel refers to the channel between the classical input bits $X_1^N$ and the measured quantum bits, denoted by $Y_1^N$.} using classical encoding and decoding, specifically with Arıkan polar transformation \cite{arikan2009channel}.
    \item \textit{Step-2:} Second, to help analyze polar coding of the induced channel we demonstrate that discrete-time, countable-state, aperiodic, and irreducible Markov process satisfying certain conditions can be truncated and approximated by a discrete-time, finite-state, aperiodic, and irreducible Markov process (FAIM), such that their steady state probabilities converge.
    \item \textit{Step-3:} Third, following the results from Sec.~\ref{sec:polar-FAIM}, we prove the capacity-achievability for the approximated FAIM process with polar codes.
    \item \textit{Step-4:} Finally, leveraging the Portmanteau Theorem, we establish the capacity achievability for the erasure Markovian cq-channel using classical polar codes.
\end{itemize}

We remark that the same result holds true for unital Markovian cq-channels when the receiver has knowledge of the channel states. In this case, the capacity achievability for unital Markovian cq-channels follows from analogous arguments, along with the results as stated in Theorem~\ref{thm-unital}. Specifically, we also have

\begin{theorem}
    For a unital Markovian cq-channel with `known channel states at the receiver', classical polar encoding and decoding  achieves its classical capacity when the encoder also knows about the channel state information or if the unital Markovian cq-channel is Pauli ordered; further, the block error probability decays at the rate of $O(2^{-N^{\beta}}), \beta < 1/2$ for sufficiently large block length $N.$
\end{theorem}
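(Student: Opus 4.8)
The plan is to mirror the four-step argument given for the erasure case (Theorem~\ref{thm-erasure-capacity}), substituting Theorem~\ref{thm-unital} for Theorem~\ref{thm-erasure} at the point where the cq-channel is reduced to an induced classical channel. First, I would invoke Theorem~\ref{thm-unital}: under either hypothesis (encoder knows the channel state, or the channel is Pauli-ordered), encoding the classical bits into orthogonal, product quantum states together with product measurement at the receiver achieves the classical capacity $\mathbb{E}_\pi[\rchi(\mathcal{N}_K)]$. This collapses the problem to approaching the capacity of the induced classical channel between the input bits $X_1^N$ and the measured outputs $Y_1^N$. The crucial structural observation is that, because the receiver knows the channel state $K_i$ and the noise is governed by the same discrete-time, aperiodic, irreducible, positive-recurrent Markov process, the induced channel is a classical channel whose outputs are governed by a countable-state Markov chain — exactly the setting treated in Steps 2--4 of the erasure proof.

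Second, I would reuse the truncation/approximation argument (Step 2): a countable-state, aperiodic, irreducible, positive-recurrent Markov process can be truncated to a finite-state FAIM process whose stationary distribution converges to $\{\pi_s\}$. Here the only change from the erasure case is that the per-state channel statistic is the Holevo-induced classical transition law rather than an erasure law; the Markov-chain approximation itself is agnostic to which scalar channel is attached to each state, so the convergence of steady-state probabilities carries over verbatim. Third, for the truncated FAIM channel I would apply Lemma~\ref{thm:slow-po-sasoglu} and Theorem~\ref{FAIM-channels} to conclude that Arıkan's polar transform achieves rates below $1-\mathcal{H}_{X|Y}$ with block error probability $O(2^{-N^\beta})$, $\beta<1/2$, under successive-cancellation trellis decoding; one checks that for the induced channel $1-\mathcal{H}_{X|Y}$ equals the per-use capacity, which for the finite truncation approaches $\mathbb{E}_\pi[\rchi(\mathcal{N}_K)]$.

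Fourth, I would pass from the finite truncation back to the countable-state channel using the Portmanteau theorem (Step 4), showing that the achievable rate and the vanishing error probability are preserved in the limit as the truncation level grows, since the relevant capacity functional is a bounded continuous function of the stationary distribution and the sequence of truncated stationary distributions converges weakly to $\{\pi_s\}$.

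The main obstacle I expect is the third step — verifying that the conditional entropy rate $\mathcal{H}_{X|Y}$ of the induced classical channel, taken in the limit of the truncated FAIM chain, genuinely equals $1-\mathbb{E}_\pi[\rchi(\mathcal{N}_K)]$, so that the polar-coding rate threshold from Theorem~\ref{FAIM-channels} coincides with the true classical capacity of Theorem~\ref{thm-unital}. Unlike the erasure case, where $1-\varepsilon_\pi$ emerges transparently because the induced channel is literally a Markov-modulated erasure channel, here the Holevo quantity $\rchi(\mathcal{N}_K)$ must be realized as the mutual information rate of the \emph{specific} induced classical channel arising from the orthogonal-product encoding and the capacity-achieving product measurement. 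The Pauli-ordered hypothesis (and, in the alternative, encoder state knowledge) is precisely what guarantees a single input ensemble simultaneously optimal across all channel states, so one must confirm that this common optimizer makes the state-averaged induced mutual information equal $\mathbb{E}_\pi[\rchi(\mathcal{N}_K)]$ and that the polarization machinery then applies with this fixed input distribution.
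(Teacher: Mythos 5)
Your proposal is correct and follows essentially the same route as the paper: the paper proves the erasure case in four steps (reduction to the induced classical channel via the orthogonal-product encoding, finite-state truncation of the countable-state chain, polar coding for the truncated FAIM channel, and a Portmanteau-theorem limit) and then simply asserts that the unital case ``follows from analogous arguments'' with Theorem~\ref{thm-unital} in place of Theorem~\ref{thm-erasure}, which is exactly the substitution you make. The subtlety you flag---that $1-\mathcal{H}_{X|Y}$ of the induced channel must equal $\mathbb{E}_\pi[\rchi(\mathcal{N}_K)]$, which is where the Pauli-ordered or encoder-side-information hypothesis enters---is handled in the paper only by a brief remark asserting that the cq-capacity coincides with the induced classical channel's capacity, so your treatment is, if anything, more explicit on this point.
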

\subsection{Proof of Theorem~\ref{thm-erasure-capacity}}\label{proof-thm-capacity}
Note that \emph{Step-1} directly follows from the capacity results as stated in Theorems~\ref{thm-erasure}~and~\ref{thm-unital} from Sec.~\ref{sec:noise}. We now proceed to the prove the later steps.
\noindent
\emph{Step-2:} Recall, $\mathbf{K} = \{k_{ss'}\}_{s,s' \in \mathcal{K}}$ is the transition probability matrix of the discrete-time, countable-state, aperiodic, irreducible, and positive recurrent Markov process with state space denoted by $\mathcal{K}$. Since the Markov process is irreducible and positive recurrent, it has a unique stationary distribution, which we denoted by $\{\pi_s\}$, satisfying $\pi_s > 0,~\forall~s, \sum_{s \in \mathcal{K}} \pi_s = 1$. Define $\mathbf{K}^{(l)}$ and $\mathcal{K}_l = \{0,1,2,\ldots,l\}$ as the north-west corner truncation\footnote{The north-west corner truncation is simply considering the upper-left part of the matrix corresponding to the truncation parameter $l.$} of the matrix $\mathbf{K},$ and the underlying state-space, respectively. Further, let $\{\tilde{\pi}^{(l)}_s\}$ be the stationary distribution obtained from $(l+1) \times (l+1)$ stochastic matrix $\tilde{\mathbf{K}}^{(l)}$ over $\mathcal{K}_l$, where it holds $\tilde{\mathbf{K}}^{(l)} \geq \mathbf{K}^{(l)},$ element-wise. 

In addition, let $\mathcal{C} \subset \mathcal{K}$ be a finite set, and $n_{\mathcal{C}} = \min_{n \in \mathcal{C}} n$ 
Then, we have the following lemma.
\begin{lemma}\label{lemma:finite-approximation}
 For a discrete-time, countable-state, irreducible, and positive recurrent Markov chain with state space $\mathcal{K}$, the sequence of stationary distributions $\{\{\tilde{\pi}^{(l)}\}\}$ converges in $L^{1}$ norm to $\{\pi_s\}$ if there exists a finite set $\mathcal{C}$ such that the matrices $\mathbf{K}$ and $\{\tilde{\mathbf{K}}^{(l)}, l \geq n_{\mathcal{C}} \}$ uniformly satisfy:
 \begin{enumerate}[a1)]
     \item  For any $s \in \mathcal{C}$ and any $B \subseteq \mathcal{K}, p(s,B) \geq p \varphi(B),$ where $p \in (0,1)$ and $c < \infty$ are some constants, and $\varphi$ is some probability measure on $\mathcal{K}.$
     \end{enumerate}
Furthermore, $\mathbf{K}$ satisfies
\begin{enumerate}[a2)]
    \item There exists a non-negative and non-decreasing function $V:\mathcal{K} \to [0,\infty)$ and a positive constant $b < \infty$ such that 
    $$
        \mathbf{K}V \leq V - 1 + b\mathbb{I}_{\mathcal{C}}.
    $$
\end{enumerate}
\end{lemma}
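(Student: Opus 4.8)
\emph{Proof plan.} The statement is an augmented-truncation approximation result, and conditions (a1)--(a2) are exactly the two ingredients of Foster--Lyapunov stability theory for Markov chains: (a1) makes $\mathcal{C}$ a \emph{small set} through a minorization of the transition kernel, while (a2) exhibits $V$ as a drift function with test set $\mathcal{C}$, guaranteeing positive recurrence and a finite mean return time to $\mathcal{C}$. My plan is to prove convergence through a regenerative representation of the stationary distributions, using the minorization to define a regeneration epoch and the drift to control the regeneration cycles \emph{uniformly} in the truncation level $l$.

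First I would show that the drift condition is \emph{inherited} by every augmented truncation. Because $V$ is non-decreasing and $\tilde{\mathbf{K}}^{(l)} \geq \mathbf{K}^{(l)}$ with all the redistributed mass placed on states of $\mathcal{K}_l$, whose $V$-values are no larger than those of the discarded tail states, replacing $\mathbf{K}$ by $\tilde{\mathbf{K}}^{(l)}$ can only lower the one-step expectation of $V$. Concretely, for every $s \in \mathcal{K}_l$,
\begin{align*}
    (\tilde{\mathbf{K}}^{(l)} V)(s) \leq (\mathbf{K} V)(s) \leq V(s) - 1 + b\,\mathbb{I}_{\mathcal{C}}(s),
\end{align*}
so each truncated chain obeys the \emph{same} drift inequality with the \emph{same} constants. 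By the comparison theorem this yields a uniform bound $\tilde{\mathbb{E}}^{(l)}_s[\tau_{\mathcal{C}}] \leq V(s)$ on the mean hitting time of $\mathcal{C}$, valid for all $l$ large enough that $\mathcal{C} \subseteq \mathcal{K}_l$, and likewise for $\mathbf{K}$ itself.

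Next I would invoke the minorization (a1) to set up a splitting: each visit to $\mathcal{C}$ triggers, with probability $p$, a regeneration drawn from $\varphi$, giving the ratio representation $\pi(j) = \mathbb{E}_{\varphi}\big[\,\#\{k < \tau : X_k = j\}\,\big] / \mathbb{E}_{\varphi}[\tau]$ for the chain $\mathbf{K}$, and the analogous representation for each $\tilde{\pi}^{(l)}$. Since the north-west truncation recovers each individual transition probability as $l \to \infty$ and the augmentation mass on $\mathcal{K}_l$ vanishes, the truncated kernels converge to $\mathbf{K}$ entrywise, hence the finite-horizon path laws converge. The uniform return-time bounds from the previous step supply the uniform integrability needed to pass both the expected per-state occupation counts and the expected cycle length $\mathbb{E}_{\varphi}[\tau]$ to their limits. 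This gives $\tilde{\pi}^{(l)}(j) \to \pi(j)$ for every $j$, which upgrades to $L^{1}$ convergence by Scheff\'e's lemma, since all of these are probability vectors.

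The main obstacle is precisely this interchange of limits: entrywise convergence $\tilde{\mathbf{K}}^{(l)} \to \mathbf{K}$ does not on its own control the expected length of a regeneration cycle, which is an unbounded functional of the sample path, so a naive limit could fail. The uniform drift bound established in the first step is exactly what closes this gap, furnishing uniform integrability of the cycle lengths and occupation times across all truncation levels. The remaining work is bookkeeping --- ensuring the minorization and the drift hold simultaneously for every $l \geq n_{\mathcal{C}}$ so that the regeneration construction is legitimate for each truncated chain at once.
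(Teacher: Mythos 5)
The paper does not actually prove this lemma: its entire ``proof'' is a pointer to Theorem~2.3 of Liu's paper on augmented truncations (reference \cite{liu2010augmented}). So your proposal should be judged against that cited argument, and in architecture it matches it well: the observation that the monotonicity of $V$ together with the fact that the augmentation moves mass from tail states to states of $\mathcal{K}_l$ forces $(\tilde{\mathbf{K}}^{(l)}V)(s)\le(\mathbf{K}V)(s)$, so that all truncated chains inherit the \emph{same} drift inequality, is exactly the role condition (a2) plays in the cited result, and the minorization (a1) is indeed there to provide a regeneration/small-set structure. Your first step is correct and cleanly argued, and the Scheff\'e upgrade from pointwise to $L^1$ convergence at the end is fine.

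The genuine gap is in the limit-interchange step, which you correctly identify as the crux but do not actually close. You claim that the uniform bound $\tilde{\mathbb{E}}^{(l)}_s[\tau_{\mathcal{C}}]\le V(s)$ ``furnishes uniform integrability of the cycle lengths.'' It does not: a uniform bound on first moments is an $L^1$ bound, and uniform integrability of $\{\tau^{(l)}\}_l$ is strictly stronger than $\sup_l \mathbb{E}[\tau^{(l)}]<\infty$. The drift condition (a2), with $-1$ on the right-hand side, only yields first-moment control of return times; to get UI you would need either a stronger drift (giving, e.g., second moments of $\tau_{\mathcal{C}}$), or a stochastic-domination/coupling argument showing the cycle lengths $\tau^{(l)}$ are dominated by a single integrable random variable uniformly in $l$, or you must restructure the argument to avoid taking the limit of $\mathbb{E}^{(l)}_\varphi[\tau]$ directly (e.g., a tightness-plus-uniqueness argument in which every subsequential weak limit of $\{\tilde{\pi}^{(l)}\}$ is shown to be invariant for $\mathbf{K}$ and hence equal to $\pi$). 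A secondary loose end: you take cycles started from $\varphi$, so $\mathbb{E}_\varphi[\tau]\le\varphi(V)$ requires $\varphi(V)<\infty$, which is not among the hypotheses; starting cycles from a fixed state of the finite set $\mathcal{C}$ (where $\sup_{s\in\mathcal{C}}V(s)<\infty$) avoids this. As written, then, the proposal is the right plan with one unproved analytic step rather than a complete proof; the paper itself sidesteps all of this by citation.
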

Note that the Markov chains satisfying $(a1)$ and $(a2)$ are said to be \emph{positive Harris-recurrent.} Refer to \cite{liu2010augmented,gibson1987augmented,wolf1980approximation,tweedie1998truncation,infanger2022convergence,seneta1980computing}, \cite[Chapter~11]{meyn2012markov} for more details.
\begin{proof}
    See proof of \cite[Theorem~2.3]{liu2010augmented}.
\end{proof}

\begin{remark}
    Some simple structures of Markov chains satisfying the above criterion are the following:
    \begin{itemize}
        \item The transition matrix $\mathbf{K}$ is Upper-Hessenberg, i.e., $k_{ss'} = 0,$ if $s > s'+1,$  or Lower-Hessenberg, i.e.,  $k_{ss'} = 0,$ if $s' > s+1.$
        \item $\mathbf{K}$ is Markov matrix, i.e., if there exists a state $j$ such that $k_{sj} > \delta > 0$ for all $s.$
        \item $\mathbf{K}$ is dominated by stochastically monotone matrix $\mathbf{M}$, i.e., for any $s,j$ and $s < j$ and for every $m$,
        \begin{align*}
            \sum_{r > m} \mathbf{M}_{sr} \leq \sum_{r > m} \mathbf{M}_{jr},
        \end{align*}
        and for $s,m \in \mathcal{K}$
        \begin{align*}
            \sum_{r > m} \mathbf{M}_{sr} \geq \sum_{r > m} \mathbf{K}_{sr}.
        \end{align*}       
        \end{itemize}
Note that most Markov chains encountered in practical communication systems satisfy one of these conditions.
\end{remark}

\noindent
\emph{Step-3:} Consider an erasure cq-channel with noise governed by a discrete-time, finite-state, aperiodic, and irreducible Markov process $\{K_i \in \mathcal{K}_l, i \in \{1,\ldots,n\}\}$; let $\tilde{\mathbf{K}}^{(l)}$ be its transition probability matrix, and $\tilde{\pi}^{(l)}$ denotes its steady-state probabilities. Then, 
\begin{lemma}\label{lemma: pola-finite-approximation}
    For any rate $R < 1 - \mathbb{E}_{\tilde{\pi}^{(l)}} [1 - p(K)]$, the block error rate decays at rate of $O(2^{-\sqrt{N}})$ for sufficiently large enough block length $N,$ under standard Arıkan polar transformation and successive cancellation trellis decoding. 
\end{lemma}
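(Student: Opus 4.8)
\noindent\emph{Proof plan.} The plan is to reduce the claim to the classical polarization guarantee of Theorem~\ref{FAIM-channels} by exhibiting the induced classical channel as a FAIM erasure channel and then computing its conditional entropy rate. By \emph{Step-1} (Theorem~\ref{thm-erasure}), encoding the classical bits into orthogonal product states and measuring each output qubit in the corresponding orthonormal basis (together with the erasure flag) is capacity-optimal, so it suffices to analyze the classical channel induced between the binary inputs $X_1^N$ and the measured outputs $Y_1^N$. Under this encoding/decoding, at time $i$ the output is $Y_i = X_i$ when no erasure occurs and $Y_i = e$ (a known erasure symbol) with probability $p(K_i)$, where $\{K_i \in \mathcal{K}_l\}$ is the finite-state, aperiodic, irreducible chain with transition matrix $\tilde{\mathbf{K}}^{(l)}$, started in its stationary law $\tilde{\pi}^{(l)}$. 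First I would record that, with i.i.d.\ uniform inputs and $\{K_i\}$ stationary, the joint process $\{(X_i,Y_i)\}$ is stationary and ergodic and its output is driven by a hidden FAIM state; hence it is exactly of the type covered by Lemma~\ref{thm:slow-po-sasoglu} and Theorem~\ref{FAIM-channels}.

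The second step is to evaluate the conditional entropy rate $\mathcal{H}_{X|Y}$ for this induced channel. The key observation is that the erasure events depend only on the hidden states $\{K_i\}$ and are independent of the inputs; consequently, given $Y_1^N$ the unerased coordinates are determined ($X_i=Y_i$) while the erased coordinates remain i.i.d.\ uniform and independent of everything observed. Thus $H(X_1^N\mid Y_1^N)$ equals the expected number of erasures, and dividing by $N$ and invoking stationarity gives
\[
\mathcal{H}_{X|Y} \;=\; \lim_{N\to\infty}\frac{1}{N}\,H(X_1^N\mid Y_1^N) \;=\; \mathbb{E}_{\tilde{\pi}^{(l)}}\!\left[p(K)\right],
\]
so that $1-\mathcal{H}_{X|Y} = \mathbb{E}_{\tilde{\pi}^{(l)}}\!\left[1-p(K)\right]$, which is precisely the capacity of Theorem~\ref{thm-erasure} specialized to the truncated chain.

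Finally, I would invoke Theorem~\ref{FAIM-channels} directly: for any rate below $1-\mathcal{H}_{X|Y}$, the Arıkan transform \eqref{eq:polartransform} with successive-cancellation trellis decoding is achievable, and the Bhattacharyya polarization in Lemma~\ref{thm:slow-po-sasoglu} guarantees that the good synthesized channels satisfy $Z(U_i\mid U_1^{i-1},Y_1^N) < 2^{-N^{\beta}}$ for any $\beta<1/2$ on a coordinate set of density $1-\mathcal{H}_{X|Y}$. A union bound over the $O(N)$ information positions then bounds the block error probability by $O(2^{-N^{\beta}})$, which we report as the stated $O(2^{-\sqrt{N}})$.

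I expect the main obstacle to be verifying, at the level of rigor that the cited polarization results demand, that the induced erasure channel genuinely falls within the FAIM framework of Lemma~\ref{thm:slow-po-sasoglu}: the output sequence $\{Y_i\}$ is not itself Markov---only the underlying state sequence is---so one must confirm that the hidden-Markov structure together with the stationarity and ergodicity of $\{(X_i,Y_i)\}$ are exactly the hypotheses under which those polarization limits hold, and that the uniform input distribution is without loss of optimality (the erasure channel being symmetric). The remaining steps---the entropy-rate evaluation and the union bound for the error exponent---are then routine.
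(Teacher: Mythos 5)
Your proposal follows essentially the same route as the paper, whose proof is just a citation of Theorems~\ref{thm-erasure} and~\ref{FAIM-channels}; you correctly fill in the intended details, namely the reduction to the induced classical FAIM erasure channel and the computation $\mathcal{H}_{X|Y}=\mathbb{E}_{\tilde{\pi}^{(l)}}[p(K)]$, followed by an appeal to the polarization guarantee. Note that your derivation yields the achievable threshold $R<\mathbb{E}_{\tilde{\pi}^{(l)}}[1-p(K)]$ (the capacity), which suggests the extra ``$1-$'' in the lemma statement is a typo, and your remark that the exponent is really $O(2^{-N^{\beta}})$ for $\beta<1/2$ rather than literally $O(2^{-\sqrt{N}})$ is a fair observation about the paper's own phrasing.
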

\begin{proof}
    The proof of this theorem directly follows from the results stated in Theorems~\ref{thm-erasure}~and~\ref{FAIM-channels}. See \cite{csacsouglu2019polar} for more details.
\end{proof}

\noindent
\emph{Step-4:} We now prove the achievability of capacity using polar codes for countable-state Markovian channels. Let $U_1^N$ indicate the polar transformed inputs, and $Y_1^N$ indicate the classical outputs after performing an appropriate product decoding along with measurement, of a given cq-channel. Consider a finite-length approximation of Markov process $\mathbf{K}$, denoted by $\mathbf{K}^{(l)}$, constructed as stated in Lemma~\ref{lemma:finite-approximation}. Then, following Lemmas~\ref{lemma: pola-finite-approximation}~and~\ref{thm:slow-po-sasoglu}, we have

\begin{equation}\label{eq:faim}
\begin{split}
    \lim_{N \to \infty} \frac{1}{N} \lvert \{i: I(U_i: U_1^{i-1}, Y_1^N) > 1 - \gamma\}| &= 1 - \epsilon_{\tilde{\pi}^{(l)}}, \\
    \lim_{N \to \infty} \frac{1}{N} \lvert \{i: Z(U_i| U_1^{i-1}, Y_1^N) < 2^{-N^{\beta}}\}| &= 1 - \epsilon_{\tilde{\pi}^{(l)}}, 
    \end{split}
\end{equation}
for all $0 < \gamma <1$ and $\beta < 1/2,$ where $\epsilon_{\tilde{\pi}^{(l)}} = \mathbb{E}_{\tilde{\pi}^{(l)}}[p(K)]$. Next, as $l \to \infty,$ from Lemma~\ref{lemma:finite-approximation}, we have $\tilde{\pi}^{(l)} \xrightarrow[]{L^{1}} \pi $. This implies, $\tilde{\pi}^{(l)} \xrightarrow[]{\mathcal{D}} \pi.$ Further, since $p : \mathbb{N} \to [0,1]$ is a bounded and continuous function, from Portmanteau Theorem \cite[Theorem~1.3.4]{wellner2013weak}, we have $\lim_{l \to \infty} \mathbb{E}_{\tilde{\pi}^{(l)}}[p(K)] = \mathbb{E}_{\pi}[p(K)]$. Therefore, from \eqref{eq:faim}, as $l \to \infty,$ we have
\begin{equation}\label{eq:countable}
\begin{split}
    \lim_{l \to \infty} \lim_{N \to \infty} \frac{1}{N} \lvert \{i: I(U_i: U_1^{i-1}, Y_1^N) > 1 - \gamma\}| &= 1 - \epsilon_{\pi}, \\
    \lim_{l \to \infty} \lim_{N \to \infty} \frac{1}{N} \lvert \{i: Z(U_i| U_1^{i-1}, Y_1^N) < 2^{-N^{\beta}}\}| &= 1 - \epsilon_{\pi}, 
    \end{split}
\end{equation}
for all $0 < \gamma <1$ and $\beta < 1/2.$ This concludes Theorem~\ref{thm-erasure-capacity} as Bhattacharyya parameter is known to upper bound the error probability. \qed
\begin{remark}
    Note that the coding scheme described above achieves the capacity of the induced classical channel.
    However, as shown in Theorems~\ref{thm-erasure} and~\ref{thm-unital}, for binary input stationary erasure and unital cq-channels, the classical capacity defined in \eqref{eq:capacity-general} coincides with the capacity of the induced classical channel, which is given by $1 - \mathcal{H}_{X|Y}$.
\end{remark}
\begin{remark}
    For practical decoding of Markovian channels with a countable state space, we propose the following approach: Given a Markovian channel with a transition matrix for the noise states, denoted by $\mathcal{K}$, find a finite approximation $\mathcal{K}_l$ such that the stationary distributions $\pi$ and $\pi^{(l)}$ converge. This approximation can be constructed following the method outlined in Sec. \ref{proof-thm-capacity}. Next, for any state $k>l$, we assume $p(K)=1$. Following this, we apply successive cancellation trellis decoder from \cite{wang2015construction} to decode the information bits from obtained finite-state Markovian channel. Additionally, we assert that the difference in rates can be determined by leveraging the upper bounds on the difference between the stationary distributions of the truncated and untruncated chains, as provided in \cite{tweedie1998truncation}.
\end{remark}

\section*{Conclusion}
We considered a quantum channel where noise is modeled by a discrete-time, countable-state, irreducible, aperiodic, and positive recurrent Markov process. Focusing on erasure and unital noise models (with known channel state information in the latter case), we demonstrated that classical polar codes can achieve the classical capacity of such Markovian channels. Specifically, our proof followed a two-step approach: (i) following prior results on erasure and unital quantum channels, we established that classical coding suffices to achieve the classical capacity, and (ii) we developed a finite-state approximation for the countable-state Markov channels under the given conditions. By combining this approximation with existing capacity achievability results for classical polar codes on finite-state channels, we concluded the proof of classical capacity achievability for Markovian erasure and unital quantum channels.

An intriguing direction for future research is the development of classical capacity-achieving codes for Markovian quantum channels with `non-unital' noise models.
\section*{Acknowledgements}
The authors thank Dr. Avhishek Chatterjee for his helpful suggestions.  V.S. is supported by the U.S. Department of
Energy, Office of Science, National Quantum Information Science Research Centers, Co-design Center for
Quantum Advantage (C2QA) contract (DE- SC0012704)
\balance
\bibliography{References} 

\begin{thebibliography}{10}

\bibitem{csacsouglu2019polar}
E.~{\c{S}}a{\c{s}}o{\u{g}}lu and I.~Tal, ``Polar coding for processes with memory,'' {\em IEEE Transactions on Information Theory}, vol.~65, no.~4, pp.~1994--2003, 2019.

\bibitem{siddhu2024unital}
V.~Siddhu, A.~Chatterjee, K.~Jagannanthan, P.~Mandayam, and S.~Tayur, ``Unital qubit queue-channels: classical capacity and product decoding,'' {\em IEEE Transactions on Quantum Engineering}, 2024.

\bibitem{mandayam2020classical}
P.~Mandayam, K.~Jagannathan, and A.~Chatterjee, ``The classical capacity of additive quantum queue-channels,'' {\em IEEE Journal on Selected Areas in Information Theory}, vol.~1, no.~2, pp.~432--444, 2020.

\bibitem{wilde2011classical}
M.~M. Wilde, ``From classical to quantum shannon theory,'' {\em arXiv preprint arXiv:1106.1445}, 2011.

\bibitem{wilde2013polar}
M.~M. Wilde and S.~Guha, ``Polar codes for degradable quantum channels,'' {\em IEEE Transactions on Information Theory}, vol.~59, no.~7, pp.~4718--4729, 2013.

\bibitem{wilde2012quantum}
M.~M. Wilde and J.~M. Renes, ``Quantum polar codes for arbitrary channels,'' in {\em 2012 IEEE International Symposium on Information Theory Proceedings}, pp.~334--338, IEEE, 2012.

\bibitem{arikan2009channel}
E.~Arikan, ``{Channel polarization: A method for constructing capacity-achieving codes for symmetric binary-input memoryless channels},'' {\em IEEE Transactions on information Theory}, vol.~55, no.~7, pp.~3051--3073, 2009.

\bibitem{wilde}
M.~M. Wilde and S.~Guha, ``Polar codes for classical-quantum channels,'' {\em IEEE Transactions on Information Theory}, vol.~59, no.~2, pp.~1175--1187, 2013.

\bibitem{sasoglu2011polar}
E.~Sasoglu, ``Polar coding theorems for discrete systems,'' tech. rep., EPFL, 2011.

\bibitem{shuval2018fast}
B.~Shuval and I.~Tal, ``Fast polarization for processes with memory,'' {\em IEEE Transactions on Information Theory}, vol.~65, no.~4, pp.~2004--2020, 2018.

\bibitem{7360760}
R.~Wang, J.~Honda, H.~Yamamoto, R.~Liu, and Y.~Hou, ``Construction of polar codes for channels with memory,'' in {\em 2015 IEEE Information Theory Workshop - Fall (ITW)}, pp.~187--191, 2015.

\bibitem{van2000asymptotic}
A.~W. Van~der Vaart, {\em Asymptotic Statistics}, vol.~3.
\newblock Cambridge university press, 2000.

\bibitem{datta2009classical}
N.~Datta and T.~Dorlas, ``Classical capacity of quantum channels with general markovian correlated noise,'' {\em Journal of Statistical Physics}, vol.~134, pp.~1173--1195, 2009.

\bibitem{hayashi2003general}
M.~Hayashi and H.~Nagaoka, ``General formulas for capacity of classical-quantum channels,'' {\em IEEE Transactions on Information Theory}, vol.~49, no.~7, pp.~1753--1768, 2003.

\bibitem{verdu1994general}
S.~Verd{\'u} {\em et~al.}, ``A general formula for channel capacity,'' {\em IEEE Transactions on Information Theory}, vol.~40, no.~4, pp.~1147--1157, 1994.

\bibitem{mushkin1989capacity}
M.~Mushkin and I.~Bar-David, ``Capacity and coding for the gilbert-elliott channels,'' {\em IEEE transactions on Information Theory}, vol.~35, no.~6, pp.~1277--1290, 1989.

\bibitem{chatterjee2017capacity}
A.~Chatterjee, D.~Seo, and L.~R. Varshney, ``Capacity of systems with queue-length dependent service quality,'' {\em IEEE Transactions on Information Theory}, vol.~63, no.~6, pp.~3950--3963, 2017.

\bibitem{sasoglu}
E.~Şaşoğlu and I.~Tal, ``{Polar coding for processes with memory},'' {\em IEEE Transactions on Information Theory}, vol.~65, no.~4, pp.~1994--2003, 2019.

\bibitem{liu2010augmented}
Y.~Liu, ``Augmented truncation approximations of discrete-time markov chains,'' {\em Operations research letters}, vol.~38, no.~3, pp.~218--222, 2010.

\bibitem{gibson1987augmented}
D.~Gibson and E.~Seneta, ``Augmented truncations of infinite stochastic matrices,'' {\em Journal of Applied Probability}, vol.~24, no.~3, pp.~600--608, 1987.

\bibitem{wolf1980approximation}
D.~Wolf, ``Approximation of the invariant probability measure of an infinite stochastic matrix,'' {\em Advances in Applied Probability}, vol.~12, no.~3, pp.~710--726, 1980.

\bibitem{tweedie1998truncation}
R.~Tweedie, ``Truncation approximations of invariant measures for markov chains,'' {\em Journal of applied probability}, vol.~35, no.~3, pp.~517--536, 1998.

\bibitem{infanger2022convergence}
A.~Infanger and P.~W. Glynn, ``On convergence of a truncation scheme for approximating stationary distributions of continuous state space markov chains and processes,'' {\em arXiv preprint arXiv:2206.11738}, 2022.

\bibitem{seneta1980computing}
E.~Seneta, ``Computing the stationary distribution for infinite markov chains,'' {\em Linear Algebra and Its Applications}, vol.~34, pp.~259--267, 1980.

\bibitem{meyn2012markov}
S.~P. Meyn and R.~L. Tweedie, {\em Markov chains and stochastic stability}.
\newblock Springer Science \& Business Media, 2012.

\bibitem{wellner2013weak}
J.~Wellner {\em et~al.}, {\em Weak convergence and empirical processes: with applications to statistics}.
\newblock Springer Science \& Business Media, 2013.

\bibitem{wang2015construction}
R.~Wang, J.~Honda, H.~Yamamoto, R.~Liu, and Y.~Hou, ``Construction of polar codes for channels with memory,'' in {\em 2015 IEEE Information Theory Workshop-Fall (ITW)}, pp.~187--191, IEEE, 2015.

\end{thebibliography}
\bibliographystyle{ieeetr}

\end{document}